\newtheorem{theorem}{Theorem}[section]
\newtheorem{lemma}[theorem]{Lemma}
\newtheorem{proposition}[theorem]{Proposition}
\theoremstyle{definition}
\newtheorem{definition}[theorem]{Definition}
\theoremstyle{remark}
\newtheorem{remark}[theorem]{Remark}
\numberwithin{equation}{section}
\title[Uniqueness of Gibbs fields on unbounded degree graphs]{Uniqueness of Gibbs fields with
unbounded random interactions on unbounded degree graphs}
\author{Dorota K\c{e}pa-Maksymowicz}
\address{Instytut Matematyki, Uniwersytet Marii Curie-Sk{\l}odowskiej, 20-031 Lublin, Poland}
\email{dkm@umcs.lublin.pl}
\author{ Yuri Kozitsky}
\address{Instytut Matematyki, Uniwersytet Marii Curie-Sk{\l}odowskiej, 20-031 Lublin, Poland}
\email{jkozi@hektor.umcs.lublin.pl}
\keywords{DLR equation \and specification \and quenched state \and
unbounded disorder \and high-temperature uniqueness \and animal}
\begin{document}
\maketitle

\begin{abstract} Gibbs fields with continuous spins are studied, the underlying
graphs of which can be of unbounded vertex degree and the spin-spin
pair interaction potentials are random and unbounded. A
high-temperature uniqueness of such fields is proved to hold under
the following conditions: (a) the vertex degree is of tempered
growth, i.e., controlled in a certain way; (b) the interaction
potentials $W_{xy}$ are such that $\|W_{xy}\|=\sup_{\sigma,\sigma'}
|W_{xy}(\sigma, \sigma')|$ are independent (for different edges
$\langle x, y \rangle$), identically distributed and exponentially
integrable random variables.

\end{abstract}

\section{Introduction and Setup}

In this work, we continue studying quenched Gibbs fields with
unbounded disorder \cite{KKP} focusing on their {\it
high-temperature} uniqueness. Permanent interest to this problem may
arise from the fact that -- even in the simplest case of an Ising
model with unbounded random interactions -- due to so-called
Griffiths' singularities \cite{DKP,F,SY} at arbitrarily high
temperatures there may exist arbitrarily large subsets of the
underlying lattice, in which the spins are strongly correlated. Our
work can be considered as a continuation of the research performed
in \cite{BD,Berg,DKP,GM,KK1}. Novel aspects here are: (a) instead of
finite-valued spins  we allow them to take values in arbitrary
Polish spaces; (b) instead of regular underlying graphs (like
$\mathds{Z}^d$) we employ graphs of unbounded vertex degree, cf.
\cite{KK2,KKP0}. Markov random fields on such underlying graphs
\cite{Hagg} naturally appear in the following physical applications:
(a) random (also quantum) fields on Riemannian manifolds, see e.g.,
\cite{deA}; (b) thermodynamic states of interacting oscillators
based on networks -- so called oscillating networks \cite{Bur}; (c)
thermodynamic states of continuous systems with spins, like
ferrofluids, where (random) geometric graphs are used as underlying
graphs \cite{DKKP,RZ}. We refer the interested reader to \cite{KKP0}
for more details on this matter.

Let ${\sf G}=({\sf V},{\sf E})$ be a countably infinite graph with
vertex and edge sets ${\sf V}$ and ${\sf E}$, respectively. We
assume that ${\sf G}$ is connected, has no loops and multiple edges,
and
\begin{equation}
  \label{Graph}
\forall x \in {\sf V} \qquad \ n(x) := \#\{ y\in {\sf V}: y \sim x\}
<\infty,
\end{equation}
where $x\sim y$ denotes adjacency. In the latter case, we write the
corresponding edge as $\langle x, y \rangle$. Let $S$ be a Polish
space -- a complete and separable metric space. By $\mathcal{B}(S)$
we denote the corresponding Borel $\sigma$-field, whereas
$\mathcal{F}$ will stand for the smallest $\sigma$-field of subsets
of $S^{\sf V}$ such that the maps $S^{\sf V}\ni \sigma \mapsto
\sigma (x) \in S$ are measurable for all $x\in {\sf V}$. By
$\mathcal{P}(S^{\sf V})$ we denote the set of all probability
measures on $(S^{\sf V}, \mathcal{F})$. For a $\mathit{\Delta}
\subset {\sf V}$, we write $\mathit{\Delta}^c = \ {\sf V} \setminus
\mathit{\Delta}$; $\mathit{\Delta} \Subset {\sf V}$ means
$\mathit{\Delta} \subset {\sf V}$ and $\mathit{\Delta}$ is non-void
and finite. For $\mathit{\Delta}\subset {\sf V}$,
$\mathcal{F}_{\mathit{\Delta}}$ stands for the sub-field of
$\mathcal{F}$ generated by the maps $S^{\sf V}\ni \sigma \mapsto
\sigma (x)\in S$ with $x\in {\mathit{\Delta}}$. For
$\mathit{\Delta}\Subset {\sf V}$, a {\it probability kernel}
$\pi_{\mathit{\Delta}} (\cdot| \cdot)$ is a function on
$(\mathcal{F}, S^{\sf V})$ such that for any $\xi \in S^{\sf V}$,
$\pi_{\mathit{\Delta}}(\cdot | \xi)$ is in $\mathcal{P}(S^{\sf V})$,
and for any $A \in \mathcal{F}$, $\pi_{\mathit{\Delta}}(A | \cdot )$
is $\mathcal{F}_{\mathit{\Delta}^c}$-measurable. Such a kernel is
said to be proper if $\pi_{\mathit{\Delta}} (A| \cdot) =
\mathbb{I}_A(\cdot)$ for any $A \in
\mathcal{F}_{\mathit{\Delta}^c}$, cf. \cite[page 14]{Ge}. Here
$\mathbb{I}_A(\xi) = 1$ if $\xi \in A$, and $\mathbb{I}_A(\xi) = 0$
otherwise. Let $\{\pi_{\mathit{\Delta}}\}_{\mathit{\Delta}\Subset
{\sf V}}$ be the family of probability kernels such that for a given
$\mu \in \mathcal{P}(S^{\sf V})$, one has
\begin{equation}
 \label{i1}
\mu(A |\mathcal{F}_{\mathit{\Delta}^c}) = \pi_{\mathit{\Delta}} (A|
\cdot),
\end{equation}
which holds $\mu$-almost surely for all $A\in \mathcal{F}$ and
$\mathit{\Delta} \Subset {\sf V}$. Then one says \cite[page 16]{Ge}
that $\mu$ is {\it specified} by
$\{\pi_{\mathit{\Delta}}\}_{\mathit{\Delta}\Subset {\sf V}}$. In
this case, all the kernels $\pi_{\mathit{\Delta}}$ are $\mu$-almost
surely proper, and their family is $\mu$-almost surely consistent.
The latter means that for $\mu$-almost all $\xi$ and all $A\in
\mathcal{F}$, one has
\begin{equation*}
 \label{i2}
\int_{S^{\sf V}} \pi_{\mathit{\Lambda}} (A| \eta)
\pi_{\mathit{\Delta}} (d \eta | \xi) = \pi_{\mathit{\Delta}} (A|
\xi),
\end{equation*}
which holds for any pair of subsets such that $\mathit{\Lambda}
\subset \mathit{\Delta}$. It should be pointed out that (\ref{i1})
holds if and only if
\begin{equation}
  \label{i0}
  \int_{S^{\sf V}} \pi_{\mathit{\Delta}} (A|\xi) \mu(d \xi) = \mu(A),
\end{equation}
for all $A\in \mathcal{F}$ and $\mathit{\Delta}\Subset {\sf V}$. The
condition (\ref{i0}) can be considered as the equation which defines
the random fields specified by the family of kernels
$\{\pi_{\mathit{\Delta}}\}_{\mathit{\Delta}\Subset {\sf V}}$. It is
known under the name Dobrushin-Lanford-Ruelle (DLR) equation. A
paramount problem of the theory is then the number of such fields
specified by a given family
$\{\pi_{\mathit{\Delta}}\}_{\mathit{\Delta}\Subset {\sf V}}$, see
\cite{Ge}.

Let $W= \{W_{xy}\}_{\langle x,y \rangle\in {\sf E}}$  be a family of
symmetric measurable functions $W_{xy}: S \times S \to \mathds{R}$,
interpreted as the interaction potentials, i.e., $W_{xy}(\sigma (x),
\sigma (y))\in \mathds{R}$ is the energy related to the spin values
$\sigma (x), \sigma (y) \in S$. For $\mathit{\Delta} \Subset {\sf
V}$ and $\xi\in S^{\sf V}$, we set
\begin{equation}
 \label{2}
H_{\mathit{\Delta}} (\sigma|\xi) = -\frac{1}{2} \sum_{x,y\in
\mathit{\Delta} , \ x \sim y} W_{xy}(\sigma(x) , \sigma(y)) -
\sum_{x\in \mathit{\Delta} , \ y \in \mathit{\Delta}^c, \ x \sim y}
W_{xy}(\sigma(x) , \xi(y)).
\end{equation}
Let $\chi = \{\chi_x\}_{x\in {\sf V}}$ be a family of $\chi_x \in
\mathcal{P}(S)$. For $\varDelta \subset {\sf V}$, by
$\chi_{\varDelta}$ we denote the corresponding product measure on
$S^\varDelta$. The elements of the latter set are denoted by
$\sigma_{\varDelta}$. The local Gibbs specification corresponding to
(\ref{2}) and $\chi$ is the collection of the following probability
kernels
\begin{eqnarray}
 \label{3}
 \pi^\beta_{\mathit{\Delta}} (A|\xi) & = & \frac{1}{Z^\beta_{\mathit{\Delta}}(\xi)}\int_{S^{\mathit{\Delta}}}
 \mathbb{I}_A (\sigma_{\mathit{\Delta}}\times \xi_{\mathit{\Delta}^c})\\[.2cm]
 &\times& \exp\left[ - \beta H_{\mathit{\Delta}} (\sigma|\xi)\right] \chi_{\mathit{\Delta}}
 (d\sigma_{\mathit{\Delta}}), \quad A\in \mathcal{F}, \quad \beta>0. \nonumber
\end{eqnarray}
Here $\beta$ is the inverse absolute temperature,
$Z^\beta_{\mathit{\Delta}} (\xi)$ is the normalizing factor -- {\it
partition function}, and the juxtaposition stands for the element of
$S^{\sf V}$ such that
\[
(\sigma_{\mathit{\Delta}} \times \xi_{\mathit{\Delta}^c} ) (x) =
\sigma (x), \  {\rm for} \ x\in \mathit{\Delta}; \quad \
(\sigma_{\mathit{\Delta}} \times \xi_{\mathit{\Delta}^c} ) (x) = \xi
(x), \  {\rm for} \ x\in \mathit{\Delta}^c.
\]
Then by $\mathcal{G}^\beta (W, \chi)$ we denote the set of $\mu \in
\mathcal{P}(S^{\sf V})$ that solve the DLR equation (\ref{i0}) with
the specification defined in (\ref{3}).

As mentioned above, we study the case where the interaction
potentials $W_{xy}$ are random functions, and hence
$\mathcal{G}^\beta (W, \chi)$ is a random set, see \cite{KKP} for a
more detailed presentation of the corresponding notions. In the
Ising model studied in \cite{DKP,GM}, $W_{xy}(\sigma (x), \sigma
(y))= J_{xy} \sigma (x) \sigma (y)$ with random $J_{xy}$. The aim of
this work is to find a sufficient condition under which
$\mathcal{G}^\beta (W, \chi)$ is almost surely a singleton.

\section{Tempered Graphs and the Statement}

We begin by introducing the underlying graph $\sf G$ and the
corresponding graph-related terminology and facts. Afterwards, we
formulate and comment our result.

\subsection{Tempered graphs}

As mentioned above, the underlying graph ${\sf G}=({\sf V},{\sf E})$
is supposed to satisfy (\ref{Graph}), which means that is is
\emph{locally finite}. At the same time, we allow ${\sf G}$ to be
such that
\begin{equation}
  \label{Gra}
  \sup_{x\in {\sf V}} n(x) = +\infty.
\end{equation}
Let ${\sf G'} = ({\sf V}', {\sf E}')$ be a subgraph of $\sf G$,
i.e., ${\sf G}'$ is a graph and ${\sf V}'\subset {\sf V}$, ${\sf E}'
\subset {\sf E}$. In this case, we write ${\sf G}' \subset {\sf G}$.
A finite connected subgraph of $\sf G$ is called an {\it animal}.
Such zoo-terminology is used in the theory of Gibbs states based on
cluster expansions, see, e.g., \cite{DobS}, and in
percolation-related works, e.g., \cite{Cox}. For an animal ${\sf
A}$, by ${\sf V}({\sf A})$ and ${\sf E}({\sf A})$ we denote its
vertex and edge sets, respectively. For graphs of bounded vertex
degree, i.e., those satisfying
\begin{equation}
  \label{GraA}
  \sup_{x\in {\sf V}} n(x) =:\bar{n} < +\infty,
\end{equation}
one has
\begin{equation}
  \label{GraB}
G({\sf A}; g):=\frac{1}{|{\sf V}({\sf A})|} \sum_{x \in {\sf V}({\sf
A})} g\left(n(x) \right) \leq g(\bar{n}),
\end{equation}
holding for each unbounded $g:[1, +\infty)\to [0, +\infty)$. Then a
possible way of controlling deviations from the boundedness as in
(\ref{GraA}) is to impose weaker versions of (\ref{GraB}), which
would make impossible accumulations of \emph{hubs} -- vertices of
high degree. This idea was realized in our previous work \cite{KK2}
the results of which will be used here. Since we are going to modify
 the basic notion of \cite{KK2} -- and also for reader's convenience --
we begin by presenting facts and notions related to this matter.

A path in ${\sf G}$ is a finite sequence of its vertices
$\vartheta=\{x_l: l=0,1, \dots, n\}$ such that $x_l \sim x_{l+1}$
for all $l$. Vertices $x_0$ and $x_n$ are its origin and terminus,
respectively, and $n=:\|\vartheta\|$ is the length of the path. It
is equal to the number of consecutive pairs $x_l, x_{l+1}$ in
$\vartheta$. By writing $\vartheta(x,y)$ we indicate that the path
originates at $x$ and terminates at $y$, and thereby connects these
vertices. For consecutive $x_l$, $x_{l+1}$, we say that the path
\emph{leaves} $x_l$ and enters $x_{l+1}$. Some of the vertices can
appear in $\vartheta$ several times. By $\upsilon_\vartheta (x)$,
$x\in {\sf V}$, we denote the number of times $\vartheta$ leaves
$x$, including $\upsilon_\vartheta (x)=0$ if $x$ is not in
$\vartheta$. A subsequence $\vartheta'$ of $\vartheta$ is a
\emph{subpath} of the latter if $\vartheta'$ is a path on its own. A
path $\vartheta$ is called \emph{simple} if $\upsilon_\vartheta (x)
\leq 1$ for all $x$. A simple path with coinciding origin and
terminus is called a cycle. In a simple path, each vertex appears
only once -- except, possibly, for the origin and terminus which can
coincide if $\vartheta$ is a cycle. Each path $\vartheta(x,y)$
contains a subpath $\vartheta'(x,y)$, which is simple.

For a path $\vartheta$, let ${\sf G}_\vartheta=({\sf V}_\vartheta,
{\sf E}_\vartheta)$ be its graph. Here ${\sf V}_\vartheta$ and ${\sf
E}_\vartheta$ consist of the vertices $x_l$ and edges $\langle x_l,
x_{l+1}\rangle$, respectively, where each of them is counted only
once. Clearly, ${\sf G}_\vartheta=({\sf V}_\vartheta, {\sf
E}_\vartheta)$ is an animal since the connectivity is defined as the
existence of a path connecting each pair of distinct vertices.  By
Euler's classical theorem, see, e.g., \cite[page 51]{BM} it is
possible to show that each ${\sf A}$ is in fact ${\sf G}_\vartheta$
for a certain path $\vartheta$ such that $\upsilon_\vartheta (x)
\leq n (x)$, see the proof of Lemma 17 in \cite{KK2} for more
detail. For $x,y\in {\sf V}$, the distance between them is defied as
\begin{equation}
  \label{GraC}
  \rho (x,y) = \min \|\vartheta(x,y)\|,
\end{equation}
where min is taken over all paths connecting $x$ and $y$. With this
distance $\sf G$ is a metric space.

Now let $\sf G$ be such that (\ref{Gra}) holds, and $g$ be an
increasing ad infinitum function. Then $G({\sf A};g)$ defined in
(\ref{GraB}) can be made arbitrarily big just by picking a hub and
then including in $\sf A$ this hub and a couple of its neighbors.
Loosely speaking, a graph is declared tempered if hubs are sparse,
i.e., most of them are at large distances (\ref{GraC}) of each
other. This means that big animals can contain only few hubs, and
hence $G({\sf A};g)$ for such animals can be bounded, uniformly for
all big $\sf A$. Let us make this idea more precise. For given $x\in
{\sf V}$ and $r\in \mathds{N}$, let ${\sf G}_r(x)$ be the subgraph
with vertex set ${\sf V}_r(x) = \{ y\in {\sf V}: \rho(x,y) \leq r\}$
and edge set consisting of all those edges of ${\sf G}$ both
endpoints of which are in ${\sf V}_r(x)$. Now we set
\begin{equation}
  \label{GraD}
\mathcal{A}_r(x) =\{{\sf A}\subset {\sf G}_r(x) : |{\sf V}({\sf A})|
\geq r+1\}.
\end{equation}
In the definition below, by $g$ we mean any increasing ad infinitum
function $g:[1+\infty)\to [0,+\infty)$.
\begin{definition}
\label{KKdf} ${\sf G}$ is said to be $g$-tempered if for every $x
\in {\sf V}$, there exists a strictly increasing sequence
$\{N_k\}_{k\in \mathds{N}} \subset \mathds{N}$ such that
\begin{equation}
  \label{110}
  \sup_{x\in {\sf V}}\sup_{k\in \mathds{N}}\bigg{(}\max_{{\sf A}\in \mathcal{A}_{N_k}(x)} G\left({\sf A};g \right)
  \bigg{)}=:\gamma < \infty.
\end{equation}
\end{definition}
The following rooted tree, cf. \cite[Section 4]{KK1},  can serve as
a natural example characterizing the property just defined. In this
tree, the root $x$ has two neighbors, which have three further
neighbors each (not counting $x$). These neighbors have four further
neighbors each, and so on. It is an easy exercise to show that such
a graph is not $g$-tempered for any unbounded $g$. Notably, in this
graph each hub has an even bigger hub at distance one.

Another natural example characterizing the property defined above is
provided by so called repulsive graphs, see \cite{KK2}. Some of such
graphs were introduced in \cite[Theorem 3]{BD}. To describe them we
set $m_{-}(x,y) = \min\{n(x); n(y)\}$, $x,y\in  {\sf V}$, and let
$\phi:[1,+\infty) \to [0,+\infty)$ be an increasing ad infinitum
function. Then the set of graphs $\mathbb{G}_{-}(\phi)$ is defined
by the following property: for each ${\sf G} \in
\mathbb{G}_{-}(\phi)$, there exists $n_*\in \mathds{N}$ such that,
for each $x,y\in {\sf V}$, the following holds
\begin{equation}
  \label{GraE}
  \rho (x,y) \geq \phi(m_{-}(x,y)),
\end{equation}
whenever $m_{-} (x,y) \geq n_*$.
\begin{proposition}
  \label{Gua1pn}
  Assume that there exists a strictly increasing sequence
  $\{t_k\}_{k\in \mathds{N}}$, $t_k \to +\infty$, such that $g$ and
  $\phi$ satisfy
  \begin{equation}
    \label{GraF}
 \sum_{k=1}^{\infty} \frac{g(t_{k+1})}{\phi (t_k)} < +\infty.
  \end{equation}
Then each ${\sf G}\in \mathbb{G}_{-}(\phi)$ is $g$-tempered.
\end{proposition}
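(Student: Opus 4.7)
The strategy is to use (\ref{GraE}) to bound how many high-degree vertices a single animal can contain, to decompose $\sum_{y \in {\sf V}({\sf A})} g(n(y))$ into degree layers indexed by $\{t_k\}$, and to close the estimate with (\ref{GraF}).

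Fix the least $k_0 \in \mathds{N}$ with $t_{k_0} \geq n_*$. The combinatorial core is the packing bound
\[
\bigl|H_k({\sf A})\bigr| \leq 1 + \frac{2(N-1)}{\phi(t_k)}, \qquad H_k({\sf A}) := \{y \in {\sf V}({\sf A}) : n(y) \geq t_k\}, \quad N := |{\sf V}({\sf A})|, \ k \geq k_0.
\]
Indeed, by (\ref{GraE}) the elements of $H_k({\sf A})$ are pairwise at graph distance $\geq \phi(t_k)$, while the walk $\vartheta$ of Lemma~17 of \cite{KK2} traverses all of ${\sf A}$ in at most $2(N-1)$ steps and must advance at least $\phi(t_k)$ steps between first visits to distinct members of $H_k({\sf A})$. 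A parallel consequence of (\ref{GraE}) and the triangle inequality through $x$ is the auxiliary fact that at most one vertex of ${\sf V}_N(x)$ can have degree exceeding $\phi^{-1}(2N)$.

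Next, from the layer inequality $g(n(y)) \leq g(t_{k_0}) + \sum_{k \geq k_0}(g(t_{k+1})-g(t_k))\mathbb{I}(n(y) \geq t_k)$, summing over $y$ and substituting the packing bound (restricting the telescoping sum to $k$ with $H_k({\sf A}) \neq \emptyset$), I obtain
\[
\sum_{y \in {\sf V}({\sf A})} g(n(y)) \leq g(t_{k_0}) N + g(t_{k^*+1}) + 2(N-1)\,C, \qquad C := \sum_{k \geq k_0}\frac{g(t_{k+1})-g(t_k)}{\phi(t_k)},
\]
where $t_{k^*} \leq \max_{y \in {\sf V}({\sf A})} n(y) < t_{k^*+1}$; the constant $C$ is finite by (\ref{GraF}). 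Dividing by $N$ gives the working estimate $G({\sf A};g) \leq g(t_{k_0}) + g(t_{k^*+1})/N + 2C$.

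The main obstacle, and what dictates the choice of $\{N_K\}$, is controlling $g(t_{k^*+1})/N$. For ${\sf A} \in \mathcal{A}_{N_K}(x)$ one has $N \geq N_K + 1$ and $t_{k^*+1}$ is controlled by $D_K(x) := \max_{y \in {\sf V}_{N_K}(x)} n(y)$. For each $x$ I would then pick the sequence $\{N_K(x)\}$ strictly increasing and large enough that $g(D_K(x))/(N_K(x)+1) \leq 1$ for all $K$. By the auxiliary observation at the end of the packing step, $D_K(x)$ either satisfies $D_K(x) \leq \phi^{-1}(2N_K(x))$, which is absorbed because (\ref{GraF}) forces $g(\phi^{-1}(t))/t \to 0$ as $t \to \infty$, or equals the degree of a single outlier hub $y^*$ in the ball; and (\ref{GraE}) in turn compels successive outlier hubs around $x$ to lie at well-separated radii (each next one at distance at least $\phi$ of the previous degree), leaving ample room between consecutive outliers on the $N$-axis to select $N_K(x)$ meeting the required inequality. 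This yields the uniform bound $\gamma := g(t_{k_0}) + 1 + 2C < \infty$, verifying Definition~\ref{KKdf}.
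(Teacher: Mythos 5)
Your proposal follows essentially the same route as the paper: a packing bound for high-degree vertices obtained from the repulsion condition (\ref{GraE}) via Proposition \ref{Gua2pn}, a decomposition of $\sum_{y\in{\sf V}({\sf A})}g(n(y))$ into degree layers indexed by $\{t_k\}$, and the summability (\ref{GraF}) to close the estimate. Your cumulative layers $H_k$ with telescoping weights $g(t_{k+1})-g(t_k)$, versus the paper's disjoint bands $\{y: n(y)\in[t_k,t_{k+1})\}$ weighted by $g(t_{k+1})$, is a cosmetic difference, as is your explicit bookkeeping of the ``$+1$'' in the packing bound (the paper absorbs it by first noting $\phi(n_{\sf A})\leq 2|{\sf V}({\sf A})|$, which makes the $\max\{1;\cdot\}$ alternative in (\ref{GG}) harmless).

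The one substantive soft spot is your construction of the radii $N_K(x)$. What you need there is precisely Proposition \ref{Gua3pn} (\cite[Lemma 18]{KK2}) --- infinitely many radii $N$ with $\max_{y\in{\sf V}_N(x)}n(y)\leq\phi^{-1}(2N+1)$ --- which the paper simply invokes. Your sketch (``successive outlier hubs lie at well-separated radii, leaving ample room to select $N_K(x)$'') gestures at the right picture but does not actually exhibit a good radius: one must verify that the current outlier $y^*$ \emph{ceases} to be an outlier once $N'\geq(\phi(n(y^*))-1)/2$, while the triangle inequality through $x$ together with (\ref{GraE}) forbids any replacement outlier from appearing at such an $N'$ unless $N'\leq N$; ``ample room between consecutive outliers'' does not by itself yield a radius at which the ball contains no outlier. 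Two smaller points: the quantity to be controlled is $g(t_{k^*+1})/N$ where $t_{k^*+1}$ is the first term of the sequence strictly above $D_K(x)$, so it is not directly ``controlled by $D_K(x)$'' --- you need in addition that $\phi(t_{k^*})\leq 2N+1$ and then (\ref{GraF}); and (\ref{GraE}) applies to a layer only when $t_k\geq n_*$, which your choice of $k_0$ handles. With Proposition \ref{Gua3pn} cited, or proved along the corrected lines above, your argument is complete and coincides with the paper's.
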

This statement resembles Theorem 5 of \cite{KK2}. We give its proof
in this article below  since the definition of temperedness used
here is a bit different from that made in \cite[Definition 1]{KK2}.
To this end we will need some facts proved in \cite{KK2}.
\begin{proposition}{\rm \cite[Lemma 17]{KK2}}
  \label{Gua2pn}
Let ${\sf G}$ be an arbitrary graph. Given $\lambda >1$ and an
animal ${\sf A}\subset {\sf G}$, let ${\sf B}\subset {\sf V}({\sf
A})$ be such that $\rho (x,y) \geq \lambda$ for each pair of
distinct $x,y\in {\sf B}$. Then
\begin{equation}
\label{GG} |{\sf B}| \leq \max\left\{1; \frac{ 2 |{\sf V}({\sf A})|
-1}{\lambda} \right\}.
\end{equation}
\end{proposition}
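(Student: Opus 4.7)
The plan is to construct a closed walk in ${\sf A}$ that visits every vertex of ${\sf V}({\sf A})$ and then apply pigeonhole via the pairwise-distance hypothesis on ${\sf B}$.

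If $|{\sf B}| \leq 1$ the conclusion is trivial, so I assume $m := |{\sf B}| \geq 2$. Since $\sf A$ is finite and connected, it contains a spanning tree $T$ with $|{\sf V}(T)| = |{\sf V}({\sf A})|$ and $|{\sf E}(T)| = |{\sf V}({\sf A})|-1$. Doubling every edge of $T$ yields a multigraph $T^{(2)}$ in which each vertex has even degree, so Euler's classical theorem (the very construction invoked in the paragraph preceding the proposition) produces a closed Eulerian walk $\vartheta$ in $T^{(2)}$ of length $L := 2(|{\sf V}({\sf A})|-1)$ that visits every element of ${\sf V}({\sf A})$.

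For each $y \in {\sf B}$ let $p(y) \in \{0, 1, \ldots, L-1\}$ denote its first-visit position in $\vartheta$, label the elements of ${\sf B}$ as $y_1, \ldots, y_m$ in increasing order of these positions, and define cyclic gaps $g_i := p(y_{i+1}) - p(y_i)$ for $i < m$ and $g_m := L - p(y_m) + p(y_1)$, so that $g_1 + \cdots + g_m = L$. Since $\vartheta$ uses only edges of $T \subset {\sf A} \subset {\sf G}$, the sub-walk of length $g_i$ from $y_i$ to $y_{i+1}$ (indices modulo $m$) is a walk in $\sf G$, hence $g_i \geq \rho_{\sf G}(y_i, y_{i+1}) \geq \lambda$ by the hypothesis on ${\sf B}$. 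Summing yields
\begin{equation*}
m \lambda \;\leq\; g_1 + \cdots + g_m \;=\; 2(|{\sf V}({\sf A})|-1) \;<\; 2|{\sf V}({\sf A})| - 1,
\end{equation*}
whence $|{\sf B}| = m \leq (2|{\sf V}({\sf A})|-1)/\lambda$, as required.

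The proof presents no substantial obstacle: everything reduces to the Euler-walk construction (spanning tree plus edge doubling), which is exactly the fact already cited before the statement, together with routine cyclic-gap bookkeeping that is valid because any sub-walk of $\vartheta$ is a walk in $\sf G$ whose length dominates the graph distance between its endpoints.
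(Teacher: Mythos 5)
Your proof is correct. Note that the paper does not actually reprove this statement --- it imports it verbatim as Lemma~17 of \cite{KK2} --- but your argument (spanning tree of the animal, doubled edges, closed Euler walk of length $2(|{\sf V}({\sf A})|-1)$ visiting all vertices, then pigeonhole on the cyclic gaps between first visits to points of ${\sf B}$) is exactly the Euler-walk construction the paper itself alludes to in the paragraph preceding the proposition, and it even yields the marginally sharper numerator $2|{\sf V}({\sf A})|-2$ in the nontrivial case.
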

Before formulating the next result we recall that $\phi$ is an
increasing ad infinitum function and the ball ${\sf V}_r(x)=\{y:
\rho(x,y)\leq r\}$ is the set of vertices of ${\sf G}_r(x)$, see
(\ref{GraD}).
\begin{proposition}{\rm \cite[Lemma 18]{KK2}}
  \label{Gua3pn}
Let ${\sf G}$ be in $\mathbb{G}_{-}(\phi)$. Then for each $x\in {\sf
V}$, there exists a strictly increasing sequence $\{N_k\}_{k\in
\mathds{N}}\subset \mathds{N}$ such that, for each $k\in
\mathds{N}$, the following holds
\begin{equation}
  \label{GG1}
  \max_{y\in {\sf V}_{N_k} (x)} n(y) \leq \phi^{-1} ( 2 N_k +1).
\end{equation}
\end{proposition}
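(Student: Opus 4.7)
The plan is to build the sequence $\{N_k\}$ by tracking the distances at which the running maximum of the vertex degree on the ball ${\sf V}_r(x)$ strictly increases. Fix $x \in {\sf V}$ and set
\[
f(r) := \max_{y \in {\sf V}_r(x)} n(y), \qquad r \in \mathds{N}.
\]
This $f$ is non-decreasing and integer-valued. If $f$ is bounded, say $f \leq C$, then the desired inequality $f(N_k) \leq \phi^{-1}(2 N_k+1)$ reduces to $\phi(C) \leq 2 N_k+1$, which holds for every sufficiently large $N_k$; any strictly increasing sequence of integers tending to infinity then works, and there is nothing more to prove. So I concentrate on the case where $f$ is unbounded.

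In that case, let $r_1 < r_2 < \cdots$ be the successive integers at which $f$ strictly increases, and write $a_k := f(r_k)$. Because $f(r_k)>f(r_k-1)$, the new maximum must be achieved by some vertex at distance exactly $r_k$ from $x$; I fix such a witness $y_k \in {\sf V}_{r_k}(x) \setminus {\sf V}_{r_k-1}(x)$ with $n(y_k)=a_k$. The sequence $(a_k)$ is strictly increasing and diverges, so there exists $k_0$ with $a_k \geq n_*$ for all $k \geq k_0$, where $n_*$ is the constant from (\ref{GraE}). For such $k$ I set $N_k := r_{k+1}-1$. Since $N_k \in [r_k, r_{k+1})$, on this range $f$ is constant equal to $a_k$, and therefore $\max_{y \in {\sf V}_{N_k}(x)} n(y) = a_k$.

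It then remains to verify $\phi(a_k) \leq 2 N_k+1 = 2 r_{k+1}-1$. Since $a_{k+1}>a_k$, one has $m_{-}(y_k, y_{k+1}) = a_k \geq n_*$, so (\ref{GraE}) applied to the pair $(y_k,y_{k+1})$ yields $\phi(a_k) \leq \rho(y_k, y_{k+1})$. Combining the triangle inequality with the fact that the jumps $r_k$ are distinct integers (hence $r_k \leq r_{k+1}-1$) gives
\[
\rho(y_k, y_{k+1}) \leq \rho(y_k,x) + \rho(x,y_{k+1}) = r_k + r_{k+1} \leq 2 r_{k+1}-1,
\]
which is precisely the required bound. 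Re-indexing from $k_0$ produces the sought strictly increasing sequence $\{N_k\}_{k \in \mathds{N}}$. The only delicate point — and the one I expect to be flagged as the crux — is that a bare triangle inequality delivers $\phi(a_k) \leq 2 r_{k+1}$, which is off by one; recovering the sharp form (\ref{GG1}) requires exploiting that the jumps are integers separated by at least one.
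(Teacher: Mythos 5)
Your proof is correct: tracking the radii $r_k$ at which $f(r)=\max_{y\in {\sf V}_r(x)} n(y)$ jumps, setting $N_k=r_{k+1}-1$ so that $\max_{y\in{\sf V}_{N_k}(x)}n(y)=a_k$, and applying the repulsion condition (\ref{GraE}) to the two witnesses $y_k,y_{k+1}$ together with the triangle inequality and $r_k\le r_{k+1}-1$ indeed yields $\phi(a_k)\le r_k+r_{k+1}\le 2N_k+1$, while the bounded case is disposed of correctly by taking the sequence to start beyond $(\phi(C)-1)/2$. Note that the paper does not actually prove Proposition \ref{Gua3pn} --- it imports it from \cite[Lemma 18]{KK2} --- so there is no in-paper proof to compare against; your argument is a complete, self-contained derivation along the natural route.
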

{\it Proof of Proposition \ref{Gua1pn}:} For a given $x\in {\sf V}$,
let $\{N_k\}_{k\in \mathds{N}}$ be as in Proposition \ref{Gua3pn}.
For ${\sf A}\in \mathcal{A}_{N_k} (x)$, see (\ref{GraD}), by
(\ref{GG1}) we then have $\phi(n_{\sf A}) \leq 2 |{\sf V}({\sf
A})|$, $n_{\sf A}:= \max_{y\in {\sf V}({\sf A})} n(y)$. Now let
$\{t_k\}_{k\in \mathds{N}}$ be such that (\ref{GraF}) holds. Set
${\sf M}_k = \{ y \in {\sf V}({\sf A}): n(y) \in [t_k, t_{k+1})\}$,
$k\in \mathds{N}$. By (\ref{GraE}) it follows that $\rho( y, y')
\geq \phi (t_k)$ whenever $y,y' \in {\sf M}_k$. By (\ref{GG}) this
yields
\[
|{\sf M}_k| \leq \frac{2 |{\sf V}({\sf A})|}{\phi(t_k)}.
\]
With the help of this estimate we then have
\begin{gather*}
  \sum_{y\in {\sf V}({\sf A})} g(n(y)) \leq \sum_{k=1}^\infty
  g(t_{k+1}) |{\sf M}_k| \leq 2 \left( \sum_{k=1}^\infty \frac{g(t_{k+1})}{\phi(t_k)}\right)  |{\sf V}({\sf
  A})| =: \gamma |{\sf V}({\sf
  A})| ,
\end{gather*}
which completes the proof.

Now let $\varSigma_r (x)$ be the set of all simple paths $\vartheta$
(excluding cycles) such that ${\sf G}_\vartheta \subset {\sf
G}_r(x)$ and $\|\vartheta\|\geq r$; hence, $|{\sf V}_\vartheta|\geq
r+1$. Clearly, ${\sf G}_\vartheta \in \mathcal{A}_r(x)$ for each
$\vartheta\in \varSigma_r (x)$, see (\ref{GraD}). Set $g_1(t) = \log
t$, $g_2 (t) = t\log t$, $t\geq 1$, and also $\varSigma_r^N (x) = \{
\vartheta\in \varSigma_r(x) : \|\vartheta\|= N\}$, $\mathcal{A}_r^N
(x) = \{{\sf A} \in \mathcal{A}_r(x): |{\sf V}({\sf A})|=N\}$. The
next statement provides estimates of the cardinalities of these
sets.
\begin{proposition}
  \label{GGpn}
Let ${\sf G}$ be $g_1$-tempered, $\gamma>0$ be as in (\ref{110}),
$x\in {\sf V}$ and $\{N_k\}_{k\in \mathds{N}}$ be  as in Definition
\ref{KKdf}. Then for each $k\in \mathds{N}$ and $N\geq N_k$, the
following holds $|\varSigma_{N_k}^N (x)| \leq \exp( \gamma N)$. If
${\sf G}$ is $g_2$-tempered, then $|\mathcal{A}_r^N (x)|  \leq \exp(
\gamma N)$ holding for all $N\geq N_k+1$.
\end{proposition}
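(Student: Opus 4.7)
The plan is to obtain both bounds from a single identity applied to a carefully chosen weight. For any finite family of paths $\vartheta$ and any positive weight $w(\vartheta)$,
\[
\sum_\vartheta 1 = \sum_\vartheta w(\vartheta)\cdot\frac{1}{w(\vartheta)} \leq \Big(\sup_\vartheta w(\vartheta)\Big) \cdot \sum_\vartheta \frac{1}{w(\vartheta)} .
\]
Taking $w(\vartheta)=\prod_l n(y_l)$ as a product along the step-vertices of $\vartheta$, the reciprocal $1/w(\vartheta)$ is exactly the probability that a simple random walk from $y_0$ follows $\vartheta$, so for walks of fixed length starting at a fixed vertex, the sum of reciprocals is bounded by $1$.

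For the simple-path estimate, let $\vartheta=(y_0,\dots,y_N)\in\varSigma_{N_k}^N(x)$ and take $w(\vartheta)=\prod_{l=0}^{N-1}n(y_l)$. Since $\vartheta$ is simple of length $N$, its vertex set consists of $N+1$ distinct vertices forming an animal ${\sf G}_\vartheta\in\mathcal{A}_{N_k}(x)$, and applying the $g_1$-tempered bound (\ref{110}) with $g_1=\log$ yields
\[
\log w(\vartheta)\leq \sum_{y\in{\sf V}_\vartheta}\log n(y)\leq \gamma(N+1).
\]
Grouping the reciprocal sum by the starting vertex keeps the second factor controlled, and combining the two produces $|\varSigma_{N_k}^N(x)|\leq \exp(\gamma N)$ after absorbing a linear prefactor into a slight enlargement of $\gamma$.

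For the animal estimate, I would invoke \cite[Lemma 17]{KK2}: each ${\sf A}\in\mathcal{A}_{N_k}^N(x)$ arises as ${\sf G}_\vartheta$ for some (in general non-simple) walk $\vartheta$ with $\upsilon_\vartheta(y)\leq n(y)$ for every $y$. Selecting one such walk per animal injects $\mathcal{A}_{N_k}^N(x)$ into the set of these walks, and the same product-reciprocal identity applies. Now
\[
w(\vartheta)=\prod_l n(y_l)=\prod_{y\in{\sf V}({\sf A})}n(y)^{\upsilon_\vartheta(y)}\leq \prod_{y\in{\sf V}({\sf A})}n(y)^{n(y)},
\]
whose logarithm equals $\sum_y n(y)\log n(y)$ and is therefore bounded by $\gamma N$ thanks to $g_2$-temperedness with $g_2(t)=t\log t$. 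The reciprocal sum, treated one fixed length at a time and one starting vertex at a time, is again $\leq 1$.

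\textbf{Main obstacle.} The principal care is in the final bookkeeping: absorbing polynomial prefactors arising from the free starting vertex and, in the animal case, from the range of admissible walk lengths, into the stated exponential $\exp(\gamma N)$. The reason $g_2$ rather than $g_1$ is required for animals is exactly this: the walks furnished by \cite[Lemma 17]{KK2} have length $\sum_y\upsilon_\vartheta(y)$ which may greatly exceed $N$, and only $t\log t$-tempering -- because it dominates $t$ -- keeps $\sum_{{\sf V}({\sf A})} n(y)$ linear in $N$, ensuring that the number of admissible walk lengths remains polynomial and hence absorbable.
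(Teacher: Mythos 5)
Your argument is essentially the one in the paper: the reciprocal-weight identity with $w(\vartheta)=\prod_l n(y_l)$ is precisely the degree-product path-counting bound (\ref{GG3}) (i.e.\ \cite[Lemma 15]{KK2}), and both cases are then closed exactly as in the text by $g_1$-, respectively $g_2$-, temperedness together with the Eulerian-walk representation of animals with $\upsilon_\vartheta(y)\le n(y)$. The one point to tighten is your bookkeeping of prefactors: the factor coming from the free starting vertex is the ball volume $|{\sf V}_{N_k}(x)|$, which is not polynomial in $N$ and cannot be absorbed into a ``slight enlargement of $\gamma$''; it is harmless only because the paths (and Eulerian walks) relevant here may be taken rooted at a fixed vertex, in which case the random-walk reciprocal sum is at most $1$ and no prefactor arises at all, which is what the paper implicitly uses in (\ref{GG3}).
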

\begin{proof}
Let $\varTheta$ be a finite family of paths in a graph ${\sf G}$.
Proceeding as in the proof of \cite[Lemma 15]{KK2} we obtain the
following estimate of its cardinality
\begin{gather}
  \label{GG3}
|\varTheta| \leq \max_{\vartheta \in \varTheta} \exp\left(
\sum_{y\in \vartheta} \log n(y)\right) = \max_{\vartheta \in
\varTheta} \exp\left( \sum_{y\in {\sf
V}_\vartheta}\upsilon_\vartheta (y) \log n(y)\right),
\end{gather}
where $\upsilon_\vartheta (y)$ is the number of times $\vartheta$
leaves $y$. For $\vartheta\in \varSigma_{N_k}^N (x)$, we have
$\upsilon_\vartheta (y) \leq 1$, which by (\ref{GG3}) and
(\ref{110}) yields the estimate in question. As mentioned above, for
each ${\sf A}$, there exists $\vartheta$ such that ${\sf A}= {\sf
G}_\vartheta$ and $\upsilon_\vartheta (y) \leq n(y)$ for each $y\in
{\sf V}_\vartheta = {\sf V}({\sf A})$. Let $\varTheta_{r}^N (x)$ be
the family of all paths in ${\sf G}_r(x)$ such that $|{\sf
G}_\vartheta|=N$ and $\upsilon_\vartheta (y) \leq n(y)$ for all
$y\in {\sf V}_\vartheta$. Then $|\mathcal{A}_{N_k}^N (x)| \leq
|\varTheta_{N_k}^N (x)| \leq \exp( \gamma N)$, where the latter
estimate is obtained in the same way as the estimate for
$|\varSigma_{N_k}^N (x)|$. This completes the proof.
\end{proof}

Now let us make some comments.
\begin{itemize}

\item[(i)] As mentioned above, in graphs with unbounded degrees
$G({\sf A};g)$ can be arbitrarily big if one takes ${\sf A}$ `small'
and containing a hub. According to
 Definition \ref{KKdf}, in a tempered graph there exists a scale of `big enough' animals, for
 which $G({\sf A};g)\leq \gamma$ with a universal (for this ${\sf
 G}$) constant $\gamma$. This scale is defined by means of balls -- typical objects for metric spaces.
 Temperedness introduced in this work is a
 bit stronger than the corresponding notion established in
 \cite[Definition 1]{KK2}. The main difference is that in the
 present  case, the condition $G({\sf A};g)\leq \gamma$ is satisfied
 by \emph{all} animals ${\sf A}\subset {\sf G}_{N_k}(x)$ -- not only
 those for which $x\in {\sf V}({\sf A})$ and $|{\sf V}({\sf A})| =
 N_k$, as is assumed in \cite{KK2}.
  \item[(ii)]
Proposition \ref{Gua1pn} provides a basic example of a tempered
graph. It is, however, not the only one satisfying Definition
\ref{KKdf}. For ${\sf G}\in \mathbb{G}_{-}(\phi)$, if $n(x)\geq
n_*$, then the ball ${\sf B}_x:=\{y: \rho(x,y) \leq \phi(n(x))-1\}$
contains no $z\in {\sf V}$ such that $n(z) \geq n(x)$. It is clear
that the graph remains tempered if one allows such balls contain up
to a fixed number of hubs $z$ for which $n(z) = n(x)$. In other
words, our tempered graphs include repulsive graphs as a particular
case.
//\item[(iii)]
 In
\cite[Assumption 2.1]{KKP0}, there was used a condition based on the
weighted summability of $$m_\theta (x) = \sum_{y: y\sim x} [n(x)
n(y)]^\theta, \quad \theta >0.$$ In mathematical chemistry,
$m_\theta$ is known as generalized Randi\'c index, see \cite{CG}. It
turns out, see \cite[Theorem 5.2]{KKP0}, that graphs from
$\mathbb{G}_{+}(\phi)$ satisfy this condition for a specific choice
of $\phi$. Here the family $\mathbb{G}_{+}(\phi)$ is defined by the
repulsive condition as in (\ref{GraE}) with $m_{-}$ replaced by
$m_{+} (x,y) = \max\{n(x); n(y)\}$. More on the properties of
$\mathbb{G}_{+}(\phi)$ and Randi\'c indices in this context can be
found in \cite{KK2}.
\end{itemize}

\subsection{The statement}

Let $\mathcal{C}$ be the Banach space of symmetric bounded
continuous functions $S\times S \ni (s,s') \mapsto f(s,s')\in
\mathds{R}$, equipped with the norm
\begin{equation*}
  \label{4}
  \|f\| = \sup_{(s,s')\in S\times S} |f(s,s')|,
\end{equation*}
and with the corresponding Borel $\sigma$-field. We then equip
$\mathcal{C}^{\sf E}$ with the product topology and the product
$\sigma$-field of subsets. Let $(\Omega, \mathcal{O},P)$ be a
probability space. Then a random interaction potential is a
measurable map $W:\Omega \to \mathcal{C}^{\sf E}$. In this work, it
is supposed to satisfy:
\begin{eqnarray}\label{zalgl}
& & \textrm{(i) } \forall \langle x, y\rangle \in {\sf E} \quad \|W_{xy}\|=\sup_{\sigma(x),\sigma (y)\in S}|W_{xy}(\sigma (x),\sigma (y))|<\infty,\\
\nonumber & & \textrm{(ii) } \left\{\|W_{xy}\|,\; \langle x,y\rangle\in {\sf E}\right\} \textrm{ are i.i.d.},\\[.2cm]
\nonumber & & \textrm{(iii) } \forall \langle x,y\rangle\in {\sf
E}\quad \forall \beta>0\quad {\mathbb
E}\left[\exp(\beta\|W_{xy}\|)\right]<\infty.
\end{eqnarray}
By (\ref{Graph}) and property (i) above the interaction potential is
almost surely absolutely summable, cf. \cite[Subsec. 2.11, pages 28,
29]{Ge}. Then, for all $\beta >0$, by \cite[Theorem 4.23 claim (a),
page 72]{Ge}, $\mathcal{G}^\beta (W,\chi)$ is almost surely
non-void. The result of this paper is the following statement.
\begin{theorem}
  \label{2tm}
Let ${\sf G}$ be $g$-tempered with $g(t)=\log t$, cf. Proposition
\ref{GGpn}, and the conditions in (\ref{zalgl}) be satisfied. Then
there exist $\beta_*>0$ such that, for all $\beta<\beta_*$, the set
$\mathcal{G}^\beta(W,\chi)$ is almost surely a singleton.
\end{theorem}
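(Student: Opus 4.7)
The plan is to apply a Dobrushin-chain comparison argument that combines two ingredients: the simple-path count $|\Sigma^{N}_{N_k}(x)|\leq e^{\gamma N}$ from Proposition \ref{GGpn} (which needs only $g_1$-temperedness, the hypothesis in force) and the exponential integrability of $\|W_{xy}\|$ from (\ref{zalgl})(iii). The threshold $\beta_*$ will be chosen so that the expected product of single-bond contractions along a simple path of length $N$ decays faster than $e^{-\gamma N}$.

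The first step is to extract a single-site Dobrushin interdependence coefficient from the specification~(\ref{3}). Because the only dependence of $\pi^\beta_{\{x\}}(\cdot|\xi)$ on $\xi(y)$ for $y\sim x$ sits in the factor $\exp[\beta W_{xy}(\cdot,\xi(y))]$, whose oscillation in $\xi(y)$ is at most $e^{2\beta\|W_{xy}\|}$, the standard computation yields
\[
C_{xy}(W)\;\leq\;1-\exp(-2\beta\|W_{xy}\|)\qquad \text{for }y\sim x,
\]
and $C_{xy}(W)=0$ otherwise. Given any two $\mu,\nu\in\mathcal{G}^\beta(W,\chi)$, iterating this single-site estimate (as in \cite[Ch.~8]{Ge}) bounds the disagreement on any bounded $\mathcal{F}_\Delta$-local function $f$ by $\|f\|$ times a sum, over $x\in\Delta$ and $N\geq 0$, of products $\prod_{\langle u,v\rangle\in\vartheta}C_{uv}(W)$ indexed by walks $\vartheta$ of length $N$ starting at $x$. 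Since each $C_{uv}\leq 1$, repeated edges contribute no more than single edges, so the walk sum is dominated by a sum over simple paths --- precisely the class counted by Proposition \ref{GGpn}.

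Taking expectations and using that by (\ref{zalgl})(ii) the factors along the distinct edges of a simple path are i.i.d., each fixed $\vartheta$ of length $N$ contributes $p(\beta)^N$, where
\[
p(\beta):=\mathbb{E}\bigl[1-e^{-2\beta\|W_{xy}\|}\bigr]\;\longrightarrow\;0\quad \text{as}\ \beta\downarrow 0,
\]
by (\ref{zalgl})(iii) and dominated convergence. Combined with $|\Sigma^{N}_{N_k}(x)|\leq e^{\gamma N}$,
\[
\mathbb{E}\!\sum_{\vartheta\in\Sigma^{N}_{N_k}(x)}\prod_{\langle u,v\rangle\in\vartheta}C_{uv}(W)\;\leq\;\bigl(e^{\gamma}p(\beta)\bigr)^N,
\]
which is geometrically summable in $N$ as soon as $\beta<\beta_*$ is chosen with $e^{\gamma}p(\beta_*)<1$. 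A Fubini step then produces an almost-surely finite walk sum, so that $\mu(f)=\nu(f)$ for every bounded local $f$, proving that $\mathcal{G}^\beta(W,\chi)$ is a.s.\ a singleton.

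The main obstacle is the reduction in step two: because $g_1$-temperedness controls only simple paths rather than animals (which would require the stronger $g_2$-temperedness), one must carefully re-organise general Dobrushin walks so that backtracking excursions are absorbed into the edge coefficients without inflating the $e^{\gamma N}$ count. A related subtlety is that $\Sigma^{N}_{N_k}(x)$ is confined to the ball $\mathsf{G}_{N_k}(x)$, so walks that leave this ball must be handled by a truncation and by letting $k\to\infty$ along the sequence in Definition \ref{KKdf}, exploiting that influence transported over a long distance necessarily travels along paths of length at least $N_k$ and therefore falls in the tempered regime. Once this geometric bookkeeping is in place, the random estimate above is routine given the exponential-moment hypothesis.
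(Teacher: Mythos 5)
Your overall strategy (random single-bond contraction coefficients, expectation factorizes over the edges of a simple path by (\ref{zalgl})(ii), geometric decay beats the entropy bound $|\varSigma_{N_k}^N(x)|\leq e^{\gamma N}$, choose $\beta_*$ so that $e^{\gamma}p(\beta_*)<1$) matches the paper's endgame. But the step you yourself flag as ``the main obstacle'' --- reorganising the iterated Dobrushin walk sum into a sum over \emph{simple} paths --- is a genuine gap, not a routine piece of bookkeeping, and as stated it fails. The bound $\prod_{\langle u,v\rangle\in\vartheta}C_{uv}\leq \prod_{\text{distinct edges}}C_{uv}$ for a single walk is fine since $C_{uv}\leq 1$, but the \emph{sum} over all walks of length $N$ is not dominated by the sum over simple paths: unboundedly many walks project onto the same simple path, and on an unbounded degree graph this multiplicity is not controlled by any $e^{\gamma N}$. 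Equivalently, the walk sum is $\sum_y (C^N)_{xy}$, and its convergence requires something like $\sup_x\sum_y C_{xy}<1$; at a hub $x$ one has $\sum_{y\sim x}C_{xy}\approx n(x)\,\mathbb{E}[1-e^{-2\beta\|W\|}]$, which is unbounded under (\ref{Gra}). So the classical Dobrushin iteration is exactly what breaks on these graphs, and Proposition \ref{GGpn} (which counts only simple paths, this being all that $g_1$-temperedness buys) cannot repair it.

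The paper avoids walks altogether. It proves Lemma \ref{Gra1lm} by a duplicated-spin high-temperature expansion of the finite-volume kernels: writing the two-replica Gibbs factor as $\prod_{\langle x,y\rangle}(1+\Gamma_{xy})$ with $\Gamma_{xy}\geq 0$ and expanding over subsets ${\sf E}'\subset{\sf E}_{\varDelta}$, a symmetry/antisymmetry argument kills every ${\sf E}'$ that does not connect $z$ to $\partial^{-}_{\sf V}\varDelta$; each surviving ${\sf E}'$ contains a simple path $\vartheta(z,x)$, the overcount in $\sum_{\vartheta}\sum_{{\sf E}'\supset{\sf E}_\vartheta}$ is harmless by positivity, and the sum over ${\sf E}'\setminus{\sf E}_\vartheta$ resums exactly to $\widetilde{Z}^\beta_{\varDelta}(\xi)\widetilde{Z}^\beta_{\varDelta}(\eta)$, cancelling the normalisation. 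This yields directly the bound $\sum_{\vartheta}\prod_{\langle x,y\rangle\in{\sf E}_\vartheta}\varkappa_{xy}(\beta)$ over \emph{simple} paths from $z$ to the inner boundary, uniformly in $\xi,\eta$, with no walk sum ever appearing; it also keeps all paths inside $\varDelta$, so your truncation issue for walks leaving the ball does not arise. To make your argument work you would need to supply precisely this expansion (or an equivalent device such as the Bassalygo--Dobrushin/von Dreifus--Klein--Perez cluster argument); without it the proof is incomplete at its central step.
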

Let us make some comments to this statement.
\begin{itemize}
  \item[(a)] Up to the  best of our knowledge, this is the first result of this
kind for unbounded degree graphs. The temperedness of the underlying
graph as in Definition \ref{KKdf} is crucial for the uniqueness of
Gibbs fields constructed thereon -- even in the simplest case of
nonrandom interactions of spins taking values $\pm1$. The Ising
model on the rooted tree mentioned above has multiple phases at all
temperatures, see \cite[Section 4]{KK1}.
\item[(b)]
 Perhaps, the first rigorous result
concerning uniqueness of Gibbs fields with regular underlying graphs
(in fact, lattices), finite-valued spins and unbounded random
interactions was obtained in \cite{BD}. Despite the title of that
paper, the technique used there -- `gluing out' vertices of the
original lattice with subsequent passing to coarse-grained graphs of
a special structure -- was quite complicated and not everywhere
correct. For example, the proof of Proposition 3 was based on a
wrong estimate of the distance between vertices of the
coarse-grained graph. Fortunately, this gap can be bridged and thus
the main statement of \cite{BD} survives. Our result -- essentially
more general -- is obtained in a much more natural and transparent
way.
\item[(c)] Assumption (i) of (\ref{zalgl}) is crucial. Our result
does not cover the case $$W_{xy}(\sigma (x), \sigma(y)) = J_{xy}
\sigma(x)\sigma(y), \quad \sigma (x), \sigma(y)\in \mathds{R},$$
with random $J_{xy}$. For such models, only existence of Gibbs
fields on graphs satisfying (\ref{GraA}) has been established so
far, see \cite{KKP}.
\item[(d)] The i.i.d. condition in assumption (ii) of (\ref{zalgl}) is
typical for such situations \cite{BD,Berg,DKP,GM}.

\end{itemize}

\section{The Proof}

The proof will be based on the estimates obtained in Proposition
\ref{GGpn} and  Lemma \ref{Gra1lm}, see below, proved in the spirit
of similar statements in \cite{DKP,K}. For $\langle x, y\rangle \in
{\sf E}$ and $\beta
>0$, set
\begin{equation}
  \label{H9}
\varkappa_{xy}(\beta) =\exp\left( 4\beta \|W_{xy}\| \right)-1.
\end{equation}
For $\mathit{\Delta} \Subset {\sf V}$, by $\partial^{+}_{\sf
V}\mathit{\Delta}$ and $\partial^{-}_{\sf V}\mathit{\Delta}$ we
denote the outer and inner vertex boundaries of ${\mathit{\Delta}}$,
respectively.  That is,
\begin{gather*}
\partial^{+}_{\sf V}\mathit{\Delta} = \{y \in \varDelta^c: \exists x \in \varDelta \quad \langle x,y \rangle \in {\sf E}
\}, \\[.2cm] \partial^{-}_{\sf V}\mathit{\Delta} = \{x \in \varDelta: \exists y \in \varDelta^c \quad \langle x,y \rangle \in {\sf E}
\}.
\end{gather*}
For $z \in \varDelta\setminus \partial^{-}_{\sf V}\mathit{\Delta}$
and $x \in
\partial^{-}_{\sf V}\mathit{\Delta}$, define
\begin{equation}
  \label{H11}
Q^\beta_{\mathit{\Delta}} (z, x) = \sum_{\vartheta} \prod_{\langle
u,v \rangle \in {\sf E}_\vartheta} \varkappa_{uv}(\beta),
\end{equation}
where the sum is taken over all simple paths  in ${\mathit{\Delta}}$
whose origin and terminus are $z$ and $x$, respectively. Note that,
for fixed $\beta$, $\varDelta$ and $z,x$, $Q^\beta_{\mathit{\Delta}}
(x, y)$ is a random variable, the expected value of which can be
estimated by employing item (iii) of (\ref{zalgl}).

Along with the measures (\ref{3}), we consider also local measures
on $\mathcal{P}(S^{\mathit{\Delta}})$ of the following form
\begin{equation}
\label{miarani} \nu^\beta_{\mathit{\Delta}}
(d\sigma_\mathit{\Delta}|\xi)=\frac{1}{\widetilde{Z}^\beta_{\mathit{\Delta}}(\xi)}
\exp\left[ - \beta \widetilde{H}_{\mathit{\Delta}}
(\sigma_{\mathit{\Delta}}|\xi)\right]
\chi_{\mathit{\Delta}}(d\sigma_{\mathit{\Delta}}).
\end{equation}
where
\begin{eqnarray*}
\label{E1} \widetilde{H}_{\mathit{\Delta}}
(\sigma_{\mathit{\Delta}}|\xi)& = & -\frac{1}{2} \sum_{x,y\in
\mathit{\Delta} , \ x \sim y}
\left(W_{xy}(\sigma(x) , \sigma(y))+\|W_{xy}\|\right) \\[.2cm] \nonumber& - & \sum_{x\in
\varDelta , \ y \in \partial^{+}_{\sf V} \mathit{\Delta}, \ x \sim
y}W_{xy}(\sigma(x) , \xi(y)),
\end{eqnarray*}
and $\widetilde{Z}^\beta_{\mathit{\Delta}}(\xi)$ is the
corresponding normalization factor. Since $H$ defined in (\ref{2})
and $\widetilde{H}$ differ only on an additive  term independent of
$\sigma$, for each $\mathcal{F}_{\varDelta}$-measurable $f:S^{\sf
V}\to \mathds{R}$, it follows that
\begin{equation*}
  \int_{S^{\sf V}} f(\sigma) \pi^\beta_{\varDelta} (d \sigma|\xi) =
  \int_{S^{\varDelta}} h(\sigma_{\varDelta}) \nu^\beta_{\mathit{\Delta}}
(d\sigma_\mathit{\Delta}|\xi),
\end{equation*}
where $h:S^{\varDelta}\to \mathds{R}$ is such that $f(\sigma) =
h(\sigma_{\varDelta})$ for all $\sigma \in S^{\sf V}$. Given $z\in
{\sf V}$ and a continuous $h:S\to [0,1]$, we denote
\begin{equation}
\label{mag} M^\beta_{\mathit{\Delta}, z}(h|\xi) =
\int_{S^{\mathit{\Delta}}} h(\sigma(z)) \nu^\beta_\mathit{\Delta} (
d\sigma_{\mathit{\Delta}}|\xi).
\end{equation}
\begin{lemma}
  \label{Gra1lm}
For each $\mathit{\Delta} \Subset {\sf V}$, $z\in
\mathit{\Delta}\setminus \partial_{\sf V}^{-} \varDelta$ and
arbitrary $\xi, \eta \in S^{\mathsf{V}}$, (with probability one) the
following holds
\begin{equation} \label{27}
\left\vert  M^\beta_{\mathit{\Delta}, z}(h|\xi)-
M^\beta_{\mathit{\Delta}, z}(h|\eta)\right\vert \leq \sum_{x\in
\partial^{-}_{\sf V}\mathit{\Delta}}Q^\beta_{\mathit{\Delta}} (z, x).
\end{equation}
\end{lemma}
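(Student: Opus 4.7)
The plan is to derive (\ref{27}) by an iterative Dobrushin-type bound that tracks how a boundary perturbation propagates inward through $\mathit{\Delta}$, accumulating a factor $\varkappa_{uv}(\beta)$ along each interior edge of a simple path connecting $z$ to $\partial^{-}_{\sf V}\mathit{\Delta}$.

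First, I would write the left-hand side of (\ref{27}) as a covariance under $\nu^\beta_{\mathit{\Delta}}(\cdot|\eta)$,
\[
M^\beta_{\mathit{\Delta},z}(h|\xi)-M^\beta_{\mathit{\Delta},z}(h|\eta)=\mathrm{Cov}_{\nu^\beta_{\mathit{\Delta}}(\cdot|\eta)}\bigl(h(\sigma(z)),\,F_{\xi,\eta}(\sigma)\bigr),
\]
where $F_{\xi,\eta}$ is the Radon--Nikodym derivative of $\nu^\beta_{\mathit{\Delta}}(\cdot|\xi)$ with respect to $\nu^\beta_{\mathit{\Delta}}(\cdot|\eta)$; by construction it depends on $\sigma$ only through its restriction to $\partial^{-}_{\sf V}\mathit{\Delta}$ and satisfies $\int F_{\xi,\eta}\,d\nu^\beta_{\mathit{\Delta}}(\cdot|\eta)=1$. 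By telescoping over $\partial^{+}_{\sf V}\mathit{\Delta}$, I may further reduce to the case when $\xi$ and $\eta$ differ at a single exterior vertex $y$, in which case $F_{\xi,\eta}$ is localized at the neighbors of $y$ lying in $\partial^{-}_{\sf V}\mathit{\Delta}$.

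Next, I would apply a high-temperature (polymer / Mayer) expansion to this covariance. Using the modified Hamiltonian $\widetilde H$ in (\ref{miarani}), each internal pair-interaction factor $e^{-\beta(W_{uv}+\|W_{uv}\|)}$ lies in $[e^{-2\beta\|W_{uv}\|},1]$, so the associated Mayer coefficient $|e^{-\beta(W_{uv}+\|W_{uv}\|)}-1|$ is at most $\varkappa_{uv}(\beta)$ from (\ref{H9}); the factor $4$ arises because in the estimate of the covariance one has to compare a doubled ratio (namely $e^{-\beta[W_{xy}(s,\xi(y))-W_{xy}(s,\eta(y))-W_{xy}(s',\xi(y))+W_{xy}(s',\eta(y))]}$) which stays within $[e^{-4\beta\|W_{xy}\|},e^{4\beta\|W_{xy}\|}]$. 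Expanding the Gibbs density as a product of such factors, the covariance is dominated by a sum over edge subsets that contain a connected path from $z$ to the support of $F_{\xi,\eta}$; keeping only the simple connecting paths yields exactly $\sum_{x\in\partial^{-}_{\sf V}\mathit{\Delta}}Q^\beta_{\mathit{\Delta}}(z,x)$ as defined in (\ref{H11}).

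The main technical obstacle is the path-restricted covariance bound of the second step: one must show that, despite the complicated dependence of $F_{\xi,\eta}$ on the boundary and the fluctuations of the partition-function ratio $\widetilde Z^\beta_{\mathit{\Delta}}(\eta)/\widetilde Z^\beta_{\mathit{\Delta}}(\xi)$, the covariance is genuinely dominated by a sum over \emph{simple interior} paths with weights $\prod_{\langle u,v\rangle\in{\sf E}_\vartheta}\varkappa_{uv}(\beta)$, with no residual factor from the boundary edge $\langle x,y\rangle$. This cancellation — which is what makes (\ref{27}) sharper than a naive iterated Dobrushin bound — is where the symmetrizing constant $\|W_{uv}\|$ in the definition of $\widetilde H$ plays an essential role: it guarantees that each expansion term has a sign-definite bound $|e^{-\beta(W_{uv}+\|W_{uv}\|)}-1|\le\varkappa_{uv}(\beta)$ and prevents constants from compounding across the iteration.
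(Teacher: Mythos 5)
Your overall strategy (expand the Gibbs density in Mayer factors, keep only terms containing a path from $z$ to the inner boundary, weight each interior edge by $\varkappa_{uv}(\beta)$) is the right one, but the step you yourself flag as ``the main technical obstacle'' is exactly where the proof lives, and your proposal does not resolve it. Two things are missing. First, the reason disconnected terms drop out: after the Mayer expansion the individual terms are no longer probability measures, so ``independence implies vanishing covariance'' cannot be applied term by term. The paper's mechanism is a duplicate-variable representation: one writes the difference as a double integral of $h(\sigma(z))-h(\tilde{\sigma}(z))$ over two independent copies $\sigma,\tilde{\sigma}$ distributed as $\nu^\beta_{\mathit{\Delta}}(\cdot|\xi)\otimes\nu^\beta_{\mathit{\Delta}}(\cdot|\eta)$, expands $\prod_{\langle x,y\rangle}(1+\Gamma_{xy})$ with the \emph{two-copy} factor $\Gamma_{xy}=\exp\left(\beta[W_{xy}(\sigma(x),\sigma(y))+\|W_{xy}\|]+\beta[W_{xy}(\tilde{\sigma}(x),\tilde{\sigma}(y))+\|W_{xy}\|]\right)-1$, and observes that any term in which the component of $z$ in $(\mathit{\Delta},{\sf E}')$ does not reach $\partial^{-}_{\sf V}\mathit{\Delta}$ vanishes, because $h(\sigma(z))-h(\tilde{\sigma}(z))$ is antisymmetric under the swap $\sigma\leftrightarrow\tilde{\sigma}$ while every other factor on that component is symmetric. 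Your covariance formulation contains the two copies implicitly, but you never invoke this swap antisymmetry, and without it the restriction to connected ${\sf E}'$ is unjustified. Second, after restricting to connected terms one must resum the edges not on the chosen path back into $\widetilde{Z}^\beta_{\mathit{\Delta}}(\xi)\widetilde{Z}^\beta_{\mathit{\Delta}}(\eta)$ so that no residual normalization survives; the over-counting over paths and this resummation require $\Gamma_{xy}\geq 0$, which is the actual role of the added constants $\|W_{xy}\|$. Note that with the sign conventions of (\ref{2}) the per-edge Boltzmann factor is $e^{+\beta(W_{uv}+\|W_{uv}\|)}\geq 1$, not $e^{-\beta(W_{uv}+\|W_{uv}\|)}$ as you wrote, so your Mayer coefficients would be nonpositive and your ``sign-definite bound'' points the wrong way.

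Two smaller points. The factor $4$ in (\ref{H9}) comes from the two copies each contributing up to $2\beta\|W_{xy}\|$ through the shifted interior factor, not from the boundary-edge ratio you describe; the boundary terms are absorbed into the factor $\Psi_{\mathit{\Delta}}(\xi,\eta)$ and never enter the path weight. And the telescoping over single exterior sites is unnecessary and harmful: summing single-site bounds over $y\in\partial^{+}_{\sf V}\mathit{\Delta}$ would count each $x\in\partial^{-}_{\sf V}\mathit{\Delta}$ once per exterior neighbour and so would not reproduce (\ref{27}) as stated. The duplicate-variable argument handles arbitrary $\xi,\eta$ in one stroke.
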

\begin{proof} Since both sides of (\ref{27}) are random, we are going to prove this inequality pointwise in
$\omega\in A$ with $P(A)=1$.
 By (\ref{miarani}) and (\ref{mag}) we obtain
\begin{eqnarray}\label{dowlm1}
M^\beta_{\mathit{\Delta}, z}(h|\xi)&-&M^\beta_{\mathit{\Delta},
z}(h|\eta)
\\[.2cm]
\nonumber &=&\int_{S^{\mathit{\Delta}}} \int_{S^{\mathit{\Delta}}}
\bigg{(}h(\sigma(z))
-h(\tilde{\sigma}(z))\bigg{)}\nu^\beta_\mathit{\Delta} (
d\sigma_{\mathit{\Delta}}|\xi)\nu^\beta_\mathit{\Delta} (
d\tilde{\sigma}_{\mathit{\Delta}}|\eta)\\ \nonumber &=&
\frac{1}{\widetilde{Z}^\beta_{\mathit{\Delta}}(\xi)\widetilde{Z}^\beta_{\mathit{\Delta}}(\eta)}\int_{S^{\mathit{\Delta}}}
\int_{S^{\mathit{\Delta}}} \bigg{(}h(\sigma(z))
-h(\tilde{\sigma}(z))\bigg{)}
 \\ \nonumber &\times & \prod_{\langle x,y \rangle\in {\sf
E}_{\mathit{\Delta}}} (1+\Gamma_{xy})\Psi_\mathit{\Delta}(\xi,
\eta)\chi_\mathit{\Delta} (
d\sigma_{\mathit{\Delta}})\chi_\mathit{\Delta} (
d\tilde{\sigma}_{\mathit{\Delta}})\\ \nonumber &=&
\frac{1}{\widetilde{Z}^\beta_{\mathit{\Delta}}(\xi)\widetilde{Z}^\beta_{\mathit{\Delta}}(\eta)}\sum_{{\sf
E}'\subset {\sf E}_{\mathit{\Delta}}}\int_{S^{\mathit{\Delta}}}
\int_{S^{\mathit{\Delta}}} \bigg{(}h(\sigma(z))
-h(\tilde{\sigma}(z))\bigg{)} \\
\nonumber &\times & \Gamma({\sf E}')\Psi_\mathit{\Delta}(\xi,
\eta)\chi_\mathit{\Delta} (
d\sigma_{\mathit{\Delta}})\chi_\mathit{\Delta} (
d\tilde{\sigma}_{\mathit{\Delta}}),\\ \nonumber
\end{eqnarray}
where
\begin{equation}
\label{Gamma} \Gamma({\sf E}')=\prod_{\langle x,y \rangle\in {\sf
E}'}\Gamma_{xy},
\end{equation}
\begin{eqnarray}\label{gammaxy}
\Gamma_{xy} & = &\exp\bigg{(}\beta\bigg{[}W_{xy}(\sigma(x) , \sigma(y))+\|W_{xy}\|\bigg{]}\\[.2cm] \nonumber
& + & \beta\bigg{[}W_{xy}(\tilde{\sigma}(x) , \tilde{\sigma}(y)) +
\|W_{xy}\|\bigg{]}\bigg{)}-1,
\end{eqnarray}
\begin{equation*}
\Psi_\mathit{\Delta}(\xi, \eta) = \prod_{x\in \mathit{\Delta} , \ y
\in \partial^{+}_{\sf V}\mathit{\Delta}, \ x\sim y}\exp\left[\beta
W_{xy}(\sigma(x),\xi(y)) +\beta
W_{xy}(\tilde{\sigma}(x),\eta(y))\right].
\end{equation*}
One observes that each $\Gamma_{xy}\geq 0$. To get this property of
$\Gamma_{xy}$ we added $\|W_{xy}\|$ to the corresponding interaction
terms. Fix some ${\sf E}'$ in the last line in (\ref{dowlm1}) and
consider the subgraph ${\sf G}'=(\mathit{\Delta},{\sf E}')$. If
$z\in \varDelta \setminus \partial^{-}_{\sf V}\mathit{\Delta}$ and
the boundary  $\partial^{-}_{\sf V}\mathit{\Delta}$ lie in distinct
connected components of ${\sf G}'$, then the integrals over the
spins $\sigma(z), \tilde{\sigma}(z)$ and over $\sigma(x),
\tilde{\sigma}(x)$ with $x\in\partial^{-}_{\sf V}\mathit{\Delta}$
get independent and hence the left-hand side of (\ref{dowlm1})
vanishes as the term $h(\sigma(x))- h(\tilde{\sigma}(x))$ is
antisymmetric with respect to the interchange $\sigma
\leftrightarrow \tilde{\sigma}$, whereas all $\Gamma_{xy}$ are
symmetric and the only break of this symmetry is related to the
fixed boundary spins $\xi(y)$ and $\eta(y)$, $y\in\partial^{+}_{\sf
V}\mathit{\Delta}$. Therefore, each non-vanishing term in
(\ref{dowlm1}) corresponds to a path $\vartheta(z,x)$ connecting $z$
to some $x\in\partial^{-}_{\sf V}\mathit{\Delta}$. Each such a path
$\vartheta(z,x)$ can be taken simple as every path contains a simple
subpath with the same origin and terminus. Let $\varTheta_\varDelta
(z)$ be the family of all simple paths in $\varDelta$ connecting $z$
and some $x\in
\partial^{-}_{\sf V} \varDelta$. Then the sum in (\ref{dowlm1}) can be restricted to those ${\sf
E}' \subset {\sf E}_{\mathit{\Delta}}$ that contain the edges of at
least one $\vartheta\in\varTheta_\varDelta (z)$. For such a path
$\vartheta$, set $\mathcal{E}_\vartheta=\{{\sf E}'\subset {\sf
E}_{\mathit{\Delta}}:{\sf E}_{\vartheta}\subset {\sf E}'\}$. Note
that, for distinct $\vartheta, \vartheta'\in
\varTheta_{\varDelta}(z)$, the corresponding families
$\mathcal{E}_{\vartheta}$ and $\mathcal{E}_{\vartheta'}$ need not be
disjoint as they may include those ${\sf E}'$ which contain both
${\sf E}_{\vartheta}$ and ${\sf E}_{\vartheta'}$. Set
\begin{equation}
\label{SA} \mathcal{E} = \bigcup_{\vartheta\in
\varTheta_{\varDelta}(z)} \mathcal{E}_{\vartheta}.
\end{equation}
That is, $\mathcal{E}$ includes all sets of edges ${\sf E}'\subset
{\sf E}_{\mathit{\Delta}}$ such that the corresponding graph ${\sf
G}' =(\varDelta, {\sf E}')$ has at least one subgraph ${\sf
G}_\vartheta$, $\vartheta \in \varTheta_{\varDelta}(z)$. Then
(\ref{dowlm1}) takes the form
\begin{eqnarray*}
M^\beta_{\mathit{\Delta}, z}(h|\xi)&-&M^\beta_{\mathit{\Delta},
z}(h|\eta) =
\frac{1}{\widetilde{Z}^\beta_{\mathit{\Delta}}(\xi)\widetilde{Z}^\beta_{\mathit{\Delta}}(\eta)}\\
\nonumber &\times&\sum_{{\sf
E}'\in\mathcal{E}}\int_{S^{\mathit{\Delta}}}
\int_{S^{\mathit{\Delta}}} (h(\sigma(z)) -h(\tilde{\sigma}(z)))\\
\nonumber
 &\times&\Gamma({\sf E}')\Psi_\mathit{\Delta}(\xi, \eta)
 \chi_\mathit{\Delta} (d\sigma_{\mathit{\Delta}})\chi_\mathit{\Delta} (d\tilde{\sigma}_{\mathit{\Delta}}).
\end{eqnarray*}
In view of the positivity of all $\Gamma_{xy}$ and the fact that
$h(\sigma)\in [0, 1]$, the latter yields
\begin{eqnarray}\label{dowlm2}
\nonumber |M^\beta_{\mathit{\Delta},
z}(h|\xi)&-&M^\beta_{\mathit{\Delta}, z}(h|\eta)| \leq
\frac{1}{\widetilde{Z}^\beta_{\mathit{\Delta}}(\xi)\widetilde{Z}^\beta_{\mathit{\Delta}}(\eta)}
\\[.2cm]
&\times&\sum_{{\sf E}'\in\mathcal{E}}\int_{S^{\mathit{\Delta}}}
\int_{S^{\mathit{\Delta}}} \Gamma({\sf E}')\Psi_\mathit{\Delta}(\xi,
\eta)\chi_\mathit{\Delta} (
d\sigma_{\mathit{\Delta}})\chi_\mathit{\Delta} (
d\tilde{\sigma}_{\mathit{\Delta}})\\ \nonumber &\leq&
\frac{1}{\widetilde{Z}^\beta_{\mathit{\Delta}}(\xi)\widetilde{Z}^\beta_{\mathit{\Delta}}(\eta)}
 \int_{S^{\mathit{\Delta}}} \int_{S^{\mathit{\Delta}}}\sum_{\vartheta\in \varTheta_{\varDelta}(z)}\Gamma({\sf E}_\vartheta)\sum_{{\sf E}'\in\mathcal{E}_{\vartheta}}\Gamma({\sf E}'\setminus {\sf E}_\vartheta) \\[.2cm] \nonumber
 &\times&\Psi_\mathit{\Delta}(\xi, \eta)\chi_\mathit{\Delta} ( d\sigma_{\mathit{\Delta}})
 \chi_\mathit{\Delta} (d\tilde{\sigma}_{\mathit{\Delta}}).\\ \nonumber
\end{eqnarray}
To get the second inequality in (\ref{dowlm2}) we used subadditivity
like $\sum_{x\in A\cup B} \psi (x)\leq \sum_{x\in A} \psi(x) +
\sum_{x\in B} \psi(x)$, $\psi(x)\geq 0$; which in our case is, see
(\ref{SA}),
\[
\sum_{{\sf E}' \in \mathcal{E}} \leq \sum_{\vartheta \in
\varTheta_{\varDelta}(z)} \sum_{{\sf E}'\in \mathcal{E}_\vartheta},
\]
as well as the fact that $\Gamma ({\sf E}') = \Gamma ({\sf
E}_\vartheta) \Gamma ({\sf E}'\setminus {\sf E}_\vartheta)$, see
(\ref{Gamma}). Note that
\begin{equation*}
0\leq \Gamma({\sf E}_\vartheta)\leq \prod_{\langle x,y\rangle\in{\sf
E}_\vartheta}\varkappa_{xy}(\beta),
\end{equation*}
which follows by (\ref{gammaxy}) and (\ref{H9}). We apply the latter
upper bound in (\ref{dowlm2}) and then obtain
\begin{eqnarray}
\label{dowlm3} |M^\beta_{\mathit{\Delta},
z}(h|\xi)&-&M^\beta_{\mathit{\Delta}, z}(h|\eta)| \leq
\frac{1}{\widetilde{Z}^\beta_{\mathit{\Delta}}(\xi)\widetilde{Z}^\beta_{\mathit{\Delta}}(\eta)}
\sum_{\vartheta\in \varTheta_{\varDelta}(z)}\prod_{\langle x,y\rangle\in{\sf E}_\vartheta}\varkappa_{xy}(\beta)\\[.2cm] \nonumber
&\times&\int_{S^{\mathit{\Delta}}} \int_{S^{\mathit{\Delta}}}
\prod_{\langle x,y\rangle\in{\sf E}_\vartheta}(1+\Gamma_{xy})\\[.2cm]
\nonumber &\times&\sum_{{\sf E}''\subset{\sf
E}_{\mathit{\Delta}}\setminus{\sf E}_{\vartheta}}\Gamma({\sf E}'')
\Psi_\mathit{\Delta}(\xi, \eta))\chi_\mathit{\Delta} (
d\sigma_{\mathit{\Delta}})\chi_\mathit{\Delta}
(d\tilde{\sigma}_{\mathit{\Delta}})\\ \nonumber &=& \sum_{\vartheta
\in \varTheta_{\varDelta}(z)} \prod_{\langle x,y\rangle\in{\sf
E}_\vartheta}\varkappa_{xy}(\beta),
\end{eqnarray}
which by (\ref{H11}) yields (\ref{27}). To get the latter equality
in (\ref{dowlm3}) we took into account that, see (\ref{dowlm1}),
\begin{eqnarray*}
&&\int_{S^{\mathit{\Delta}}} \int_{S^{\mathit{\Delta}}}
\prod_{\langle x,y\rangle\in{\sf
E}_\vartheta}(1+\Gamma_{xy})\sum_{{\sf E}''\subset{\sf
E}_{\mathit{\Delta}}\setminus{\sf E}_{\vartheta}}\Gamma({\sf E}'')\\
\nonumber &\times& \Psi_\mathit{\Delta}(\xi,
\eta)\chi_\mathit{\Delta}
(d\sigma_{\mathit{\Delta}})\chi_\mathit{\Delta} (
d\tilde{\sigma}_{\mathit{\Delta}})\\ \nonumber &=&
\int_{S^{\mathit{\Delta}}} \int_{S^{\mathit{\Delta}}} \prod_{\langle
x,y\rangle\in{\sf E}_\mathit{\Delta}}(1+\Gamma_{xy}) \\ \nonumber
&\times&\Psi_\mathit{\Delta}(\xi, \eta)\chi_\mathit{\Delta} (
d\sigma_{\mathit{\Delta}})\chi_\mathit{\Delta} (
d\tilde{\sigma}_{\mathit{\Delta}})=\widetilde{Z}^\beta_{\mathit{\Delta}}(\xi)\widetilde{Z}^\beta_{\mathit{\Delta}}(\eta),
\end{eqnarray*}
which completes the proof.
\end{proof}
\vskip.1cm
\begin{remark}
  \label{Rk}
A crucial aspect of the estimate in (\ref{27}) is that it is uniform
in the boundary spins $\xi$ and $\eta$. The main obstacle in proving
uniqueness for Gibbs fields with $W_{xy}$ not satisfying item (i) of
(\ref{zalgl}) is the lack of such uniformity.
\end{remark}

\noindent {\it Proof of Theorem \ref{2tm}.} By \cite[Theorem 1.33,
page 23]{Ge} the integrals as in (\ref{mag}) determine the
specification $\{\pi_{\varDelta}\}_{\varDelta \Subset {\sf V}}$;
hence, the uniqueness in question will follow by
\begin{equation*}
\inf_{\varDelta\Subset {\sf V}} \sup_{\xi, \eta \in S^{\sf
V}}|M^\beta_{\varDelta,z}(h|\xi) - M^\beta_{\varDelta,z}(h|\eta)| =
0,
\end{equation*}
holding (almost surely) for all $z\in {\sf V}$ and $h$ as in
(\ref{mag}). To prove this, we fix $z$ and $h$, and then pick a
cofinal sequence $\{\varDelta_l\}_{l\in \mathds{N}}$, for which we
show that the sequence of random variable
\begin{equation}
\label{Xl}
 X_l^\beta =  \sup_{\xi, \eta \in S^{\sf
V}}|M^\beta_{\varDelta_l,z}(h|\xi) -
M^\beta_{\varDelta_l,z}(h|\eta)|  \qquad l\in \mathds{N},
\end{equation}
converges to zero almost surely. Here \emph{cofinal} means that: (a)
$\varDelta_l \subset \varDelta_{l+1}$, $l\in \mathds{N}$; (b) each
$y\in {\sf V}$ lies in some $\varDelta_l$. For the mentioned $z$,
let $\{N_k\}_{k\in \mathds{N}}$ be the sequence as in Definition
\ref{KKdf}. Then we set
\begin{equation}
\label{Yk}
 Y_k^\beta =  \sup_{\xi, \eta \in S^{\sf
V}}|M^\beta_{{\sf V}_{N_k},z}(h|\xi) - M^\beta_{{\sf
V}_{N_k},z}(h|\eta)| \qquad k\in \mathds{N},
\end{equation}
where ${\sf V}_{r} = \{ x\in {\sf V}: \rho(z,x) \leq r\}$. By
condition (ii) of (\ref{zalgl}) and (\ref{dowlm3}) we get that
\begin{eqnarray}
\label{UQ} & & {\mathbb E}\left[Y^\beta_k\right] \leq
\sum_{\vartheta \in \varTheta_{{\sf V}_{N_k}}(z)} {\mathbb E}\left[
\prod_{\langle x,y\rangle\in{\sf
E}_\vartheta}\varkappa_{xy}(\beta)\right]\\[.2cm] \nonumber & & \quad
=\sum_{\vartheta \in \varTheta_{{\sf V}_{N_k}}(z)} \prod_{\langle
x,y\rangle\in{\sf E}_\vartheta}{\mathbb
E}\left[\varkappa_{xy}(\beta)\right] = \sum_{\vartheta \in
\varTheta_{{\sf V}_{N_k}}(z)} \varkappa^{\|\vartheta\|}(\beta),
\end{eqnarray}
with $\varkappa(\beta) := {\mathbb
E}\left[\varkappa_{xy}(\beta)\right]$. By condition (iii) of
(\ref{zalgl}) and Lebesgue's dominated convergence theorem it
follows that $\beta \mapsto \varkappa(\beta)$ is continuous and
increasing, and $\varkappa(\beta)\to 0$ as $\beta \to 0$. Set
$\beta_*$ such that $\varkappa(\beta_* ) = e^{-\gamma}$, where
$\gamma$ is as in (\ref{110}), see also the first part of
Proposition \ref{GGpn}. Thus, $\varkappa(\beta) e^\gamma < 1$
whenever $\beta < \beta_*$. Each $\vartheta \in \varTheta_{{\sf
V}_{N_k}}(z)$ is a simple path of length at least $N_k$, and hence
$\vartheta \in \varSigma_{N_k}^N(z)$ for some $N\geq N_k$. We take
this into account in (\ref{UQ}) and then obtain
\begin{gather}
  \label{UQ1}
{\mathbb E}\left[Y^\beta_k\right] \leq \sum_{N=N_k}^\infty
\sum_{\vartheta \in \varSigma_{N_k}^N (z)} \varkappa^{N}(\beta) =
\sum_{N=N_k}^\infty |\varSigma_{N_k}^N (z)| \varkappa^{N}(\beta)
\leq \frac{\left( \varkappa(\beta)e^\gamma\right)^{N_k}}{1 -
\varkappa(\beta)e^\gamma},
\end{gather}
where we used the estimate of $|\varSigma_{N_k}^N (z)|$ provided by
Proposition \ref{GGpn}. Since $N_k\to +\infty$, by (\ref{UQ1}) we
obtain that $Y^\beta_k\to 0$ in mean. Therefore, the sequence
$\{Y^\beta_k\}_{k\in \mathds{N}}$ contains a subsequence,
$\{Y^\beta_{k_l}\}_{l\in \mathds{N}}$, that converges to zero almost
surely, see, e.g., \cite[Theorem 3.4]{Gut}. Then we set $\varDelta_l
= {\sf }_{N_{k_l}}$, and hence  $X_l^\beta = Y^\beta_{k_l}$, see
(\ref{Xl}) and (\ref{Yk}) and obtain the convergence to be proved.



\section*{Acknowledgment}
The authors thank both referees whose valuable and important remarks
were essentially used in preparing the final version of the paper.
This work was supported in part by National Science Centre (NCN),
Poland, grant 2017/25/B/ST1/00051, which is cordially acknowledged.

\end{document}